\ifCLASSOPTIONcompsoc\usepackage[caption=false,font=normalsize,labelfont=sf,textfont=sf]{subfig}\else\usepackage[caption=false,font=footnotesize]{subfig}\fi
\newtheorem{lemma}{Lemma}
\newtheorem{theorem}{Theorem}
\theoremstyle{definition}
\author{Olle Abrahamsson, Danyo Danev and Erik G. Larsson\\ \\
Dept. of Electrical Engineering (ISY), \\Link{\"o}ping University, 58183 Link{\"o}ping, Sweden\\
Email: \{olle.abrahamsson, danyo.danev, erik.g.larsson\}@liu.se}
\title{Structural Balance Considerations for Networks with Preference Orders as Node Attributes}
\def\footnoterule{\kern-3\p@
  \hrule \@width 2in \kern 2.6\p@} % the \hrule is .4pt high
\newcommand\blfootnote[1]{%
  \begingroup
  \renewcommand\thefootnote{}\footnote{#1}%
  \addtocounter{footnote}{-1}%
  \endgroup
}
\begin{document}
	\maketitle
	
	\begin{abstract}
		We discuss possible definitions of structural balance conditions in a network with preference orderings as node attributes. The main result is that for the case with three alternatives (\(A,B,C\)) we reduce the \((3!)^3 = 216\) possible configurations of triangles to \(10\) equivalence classes, and use these as measures of balance of a triangle towards possible extensions of structural balance theory. Moreover, we derive a general formula for the number of equivalent classes for preferences on \(n\) alternatives. Finally, we analyze a real-world data set and compare its empirical distribution of triangle equivalence classes to a null hypothesis in which preferences are randomly assigned to the nodes.
	\end{abstract}

\blfootnote{This work was supported in part by ELLIIT and the KAW foundation.}
\section{Introduction}\label{sec:intro}

In network science, nodes and edges may be associated with attributes
that encode various properties. An important example of a node
attribute is when each node is assigned a particular \emph{type}.
Various homophily measures, with modularity as the most important
example \cite{newmangirvan04}, are then available to quantify whether edges between
nodes of the same type are more prevalent than edges between nodes of
different types.  Edge attributes may consist of, for example, a
weight that describes how strongly two nodes are connected, or a sign
($+/-$) that determines whether the the relation between two nodes is
friendly or antagonistic. For complete networks with signed edges, a celebrated
result is the structural balance theorem \cite{cartwrightharary}.  This theorem states
that if every triangle in a (complete) network has the signs either $+++$ or
$+--$, then the network can be partitioned into two subnetworks \(\mathcal{A}\) and
\(\mathcal{B}\), such that all edges within \(\mathcal{A}\) are $+$, all edges within \(\mathcal{B}\) are \(+\),
but all edges between \(\mathcal{A}\) and \(\mathcal{B}\) are $-$ (as illustrated in Figure~\ref{fig:structbal}).  The network is
then said to be \emph{balanced}.  The popular interpretation is that
the ``enemy's enemy is a friend'': in a balanced network either all
three nodes in every triangle are friends, or two team up against a
third, common, enemy. Thus, in signed networks balance means polarization, i.e., there are two
camps of friends with mutual antagonism between them.
		\begin{figure}[h]
	\centering
	\begin{tikzpicture}[auto,main_node/.style={circle,fill=blue!20,draw,minimum size=1em,inner sep=3pt]}]
	
	\node[main_node] (1) at (0,0)  {\(v_1\)};
	\node[main_node] (2) at (0, -1.7)  {\(v_2\)};
	\node[main_node] (3) at (-1.7,-3)  {\(v_3\)};
	\node[main_node] (4) at (1.7,- 3){\(v_4\)};

	\draw (1) to node [swap] {\(+\)} (2);
	\draw (1) to node [swap]  {\(-\)} (3);
	\draw (1) to node {\(-\)} (4);
	\draw (2) to node {\(-\)} (3);
	\draw (2) to node [swap] {\(-\)} (4);
	\draw (3) to node [swap] {\(+\)} (4);
	\end{tikzpicture}%
	\caption{A balanced network with subnetworks \(\mathcal{A}\), induced by the nodes \(v_1\) and \(v_2\), and \(\mathcal{B}\), induced by \(v_3\) and \(v_4\). All edges within \(\mathcal{A}\) and  \(\mathcal{B}\) have positive signs, but the edges between \(\mathcal{A}\) and \(\mathcal{B}\) have negative signs. }\label{fig:structbal}
\end{figure}
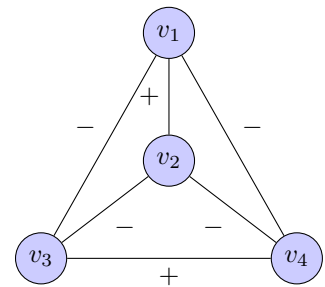

In this paper we are concerned with extensions of the structural balance concept that go beyond signed networks. Specifically we
consider networks where nodes have \emph{preference orderings} as attributes.  A preference ordering is defined in terms of a strict linear order
of alternatives  (e.g., \( A\succ B \succ C\)) and may be interpreted as  an expression of the opinion that \(A\) is preferable to \(B\), which in turn is preferable to \(C\),
 and so on. The extent to which two neighbors' preference orderings differ can be viewed as a measure of agreement or disagreement. In the extreme cases when a pair of neighbors have the exact same preferences or maximally different preferences (e.g., according to some distance metric), it would be quite natural to associate their edge with a \(+\) or \(-\), respectively. In the case of partial disagreement, one could either extend the number of edge weights or argue for a way to project the partial disagreements onto the two signs \(+/-\).  In view of this intuitive connection between preference orderings as node attributes and signed edges,  one might ask a more general question: Can triangles of nodes with preference orderings as node attributes be associated with a notion of balance? We approach, and partially answer this question by enumerating all possible combinations of preference orderings that can appear in a triangle, and categorize them into classes that are equivalent in a certain sense.

\subsection{Related Work}

\subsubsection{Preference Orderings as Node Attributes}

The motivation for studying networks with preference orderings is
clear.  For example, it is reported in the literature that the use of preference orderings
to express opinions is superior to the use of numerical scores \cite{ranknotscore, brill}. In the field of collaborative filtering-based recommender systems, several papers have leveraged on users' preference orderings data \cite{brun10,jones11,liu16}, and in social choice theory, an important problem is how to aggregate preference orderings in a group of people \cite{bajgiran21}, with applications in the design of voting methods and ethical AI systems \cite{bana21}.

Yet, only scattered results are available in the literature on networks with preference orderings as node attributes.  In \cite{brill}, opinion diffusion processes were considered and it was concluded that the outcome of a diffusion process depends both on the structure of the social network (e.g., directed or undirected; cyclic or acyclic) and on the properties of the initial preferences of each agent, for example whether the initial preference profiles satisfy the Condorcet condition or not. In \cite{salehi19} (with results refined in \cite{becirovic17}), preference aggregation via a form of ``emphatic'' voting was considered, taking into account the connections between nodes in addition to their opinions in the aggregation.  In \cite{salehi15}, two closely related problems were tackled: Preference inference and group recommendation based on partially observed ranking data. The main idea was to utilize the underlying social network structure under the assumption that the homophily and/or social influence shapes the network dynamics. The proposed models were tested empirically on several data sets, one of which was the Flixter data set \cite{jamali10}, consisting of a social network of movie watchers and their ratings of movies. Based on the ratings for each user, the relative number of movies of each genre watched by a specific user was calculated, and from these so-called \textit{user-genre scores}, ranking data of movie genres was constructed. However, due to the sparsity typical of movie ratings, only partial orders could be constructed in a this manner. Finally, in \cite{dhamal2018modeling}, two models were proposed for capturing how preferences are distributed among nodes in a typical social network. By sampling a small subset of representative nodes, the algorithms can harness the network structure to effectively construct an aggregate preference of the entire network population, and for preferences related to personal topics (such as lifestyle choices), the proposed approach was shown to be advantageous over traditional random polling. The said papers also connect to the (relatively rich) literature on preference aggregation and voting theory. However, network aspects seem to be rarely considered in that context, and we are only aware of \cite{salehi15,becirovic17,salehi19} and \cite{dhamal2018modeling}.

\subsubsection{Structural Balance}

The discovery of the structural balance theorem in 1956 \cite{cartwrightharary} has
spawned a large literature on empirical analysis of real-world
networks \cite{harary61,moore79,newcomb81,estrada2014social}, analysis of dynamic processes \cite{shi-altafini19,cisneros2019dynamic}, partially balanced networks \cite{aref17}, and perhaps most importantly in the current context, extensions to cases beyond the canonical signed-link \(+/-\) setup.  The most representative extensions are \cite{qian-extended-balance14} and \cite{meng20}. In \cite{qian-extended-balance14}, the edge weights can be any real number drawn from a total ordering. A distance metric is defined such that the negative (positive) are mapped to large (small) distances. A triangle is then said to be structurally balanced if the three distances involved satisfy the metric triangle inequality. In \cite{meng20}, the authors considered signed digraphs and redefine structural balance as a local node property: A node is called structurally balanced if a ceratin subgraph related to the node can be bipartitioned such that all directed edges within a partition have non-negative weights, and all directed edges between partitions have non-positive weights.

Another interesting direction of research is the generalization of structural balance in networks with node attributes. For example, in \cite{du21}, the authors defined balance in fully signed networks, i.e., networks where both the edges and the nodes have signs. Such a network is then said to be balanced if and only if it can be partitioned into clusters, within which nodes have identical attributes and all edges are negative. The authors provide an energy function whose minimum is taken as a measure of partial network balance, and they also propose an algorithm for the efficient computation of this value. In \cite{ he20} the method was further generalized to fully signed networks in which the number of attributes for each node is arbitrary (that is, not only \(+\) or \(-\)).

However, none of the existing literature on balance, to our knowledge, has dealt with networks with preference orderings as attributes.

\subsection{Contributions}

In this paper we discuss possible definitions of structural balance in a network with preference orderings as node attributes.  The main result (Theorem \ref{thm:main}) is that for the
three-alternative case (\(A,B,C\)) we reduce the \((3!)^3=216\) possible
configurations of triangles to 10 equivalence classes.  These 10
classes represent the 10 different types of triangles that can occur
in a network, and based on them, a notion of balance can be defined.
We also give a general formula for the number of equivalence classes
for the \(n\)-alternative case (Theorem \ref{thm:generalcase}).  Finally, we examine
numerically the data set in \cite{dhamal2018modeling} and compare its empirical
distribution of the equivalence classes to a null hypothesis in which
preference orderings are randomly assigned to the nodes.

	\section{Main results}\label{sec:model}
		We define a preference ordering on \(n\) alternatives, \(n \geq 2\),  as a permutation \(\sigma\) on \(n\) elements. A preference triangle on \(n\) alternatives, \(P_n\), is then defined to be a complete graph on \(3\) nodes (i.e., \(K_3\)), where each node is associated with a preference ordering. We introduce a relation \(\mathcal{R}\) and we say that two preference triads \(P_n^1\) and \(P_n^2\) are related if \(P_n^1\) can be transformed into \(P_n^2\) by relabeling its nodes and by applying the same permutation of the elements to all nodes (which corresponds to to relabeling the elements).  This is denoted by \(P_n^1\sim P_n^2\).
		
		For example, the two preference triads depicted in Figure~\ref{fig:preftriad} are related: In \(P_3^1\), change the node labels \(v_1, v_2, v_3\) to \(w_3, w_2, w_1\), respectively, and let elements \(A\) and \(B\) swap places in all three permutations to obtain \(P_3^2\).
		
		\begin{figure}[h]
		\centering
		\begin{tikzpicture}[main_node/.style={circle,fill=blue!20,draw,minimum size=1em,inner sep=3pt]}]
		
		\node[main_node] (1) at (0,0) [label=above:\(ABC\)] {\(v_1\)};
		\node[main_node] (2) at (-1, -1.5) [label=below:\(ACB\)]  {\(v_2\)};
		\node[main_node] (3) at (1, -1.5) [label=below:\(BAC\)] {\(v_3\)};
		\node (4) at (0,-2.7) {\(P_3^1\)};		
		\draw (1) -- (2) -- (3) -- (1);
		\end{tikzpicture}%
		\hspace{2em}
		\begin{tikzpicture}[main_node/.style={circle,fill=blue!20,draw,minimum size=1em,inner sep=3pt]}]
		
		\node[main_node] (1) at (0,0) [label=above:\(BAC\)] {\(w_3\)};
		\node[main_node] (2) at (-1, -1.5) [label=below:\(BCA\)]  {\(w_2\)};
		\node[main_node] (3) at (1, -1.5) [label=below:\(ABC\)] {\(w_1\)};
		\node (4) at (0,-2.7) {\(P_3^2\)};
		\draw (1) -- (2) -- (3) -- (1);
		\end{tikzpicture}
		\caption{Two  equivalent preference triads on \(3\) alternatives: \(P_3^1 \sim P_3^2\).}
		\label{fig:preftriad}
		\end{figure}
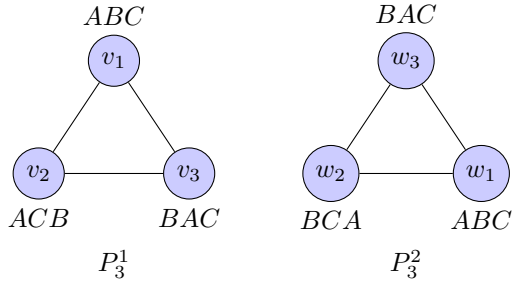
	It is easy to see that the relation \(\mathcal{R}\) is an equivalence relation: Let \(P_n^1, P_n^2\) and \(P_n^3\) be preference triads. Clearly \(P_n^1\sim P_n^1\) since no transformation is needed, so \(\mathcal{R}\) is reflexive. Furthermore, if \(P_n^1\sim P_n^2\), then we can recover \(P_n^1\) from \(P_n^2\) by reversing the swaps and undo the relabeling, so \(\mathcal{R}\) is symmetric. Finally, if \(P_n^1\sim P_n^2\) and \(P_n^2\sim P_n^3\) we can first transform \(P_n^1\) into \(P_n^2\), and then transform \(P_n^2\) into \(P_n^3\), so \(\mathcal{R}\) is transitive.
	
	This relation restricts the number of unique preference triangles. Our first theorem states that \(\mathcal{R}\) reduces the number of preference triangles on \(3\) alternatives to \(10\) cases.
	
 \begin{theorem}\label{thm:main}
The  \((3!)^3 = 216\) possible preference triads on \(3\) alternatives can be partitioned into \(10\) equivalence classes induced by the relation \(\mathcal{R}\).
 \end{theorem}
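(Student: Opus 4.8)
The plan is to recast the statement as an orbit-counting problem and apply Burnside's lemma. First I would identify a preference triad on $3$ alternatives with an ordered triple $(\sigma_1,\sigma_2,\sigma_3)\in S_3^3$, where $\sigma_i$ is the permutation recording the preference ordering at node $v_i$; there are $|S_3|^3 = 6^3 = 216$ such triples. The two moves allowed by $\mathcal{R}$ then correspond to a group action of $G = S_3 \times S_3$ on $S_3^3$: the first factor $\pi \in S_3$ relabels the nodes and hence permutes the three coordinates, while the second factor $\tau \in S_3$ relabels the alternatives and therefore acts \emph{simultaneously} on all three coordinates by $\sigma_i \mapsto \tau\circ\sigma_i$. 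These two operations commute, so $G$ is a genuine group of order $|G| = 36$, and its orbits are exactly the equivalence classes of $\mathcal{R}$.

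By the Cauchy--Frobenius (Burnside) orbit-counting lemma, the number of equivalence classes equals
\begin{equation*}
\frac{1}{|G|}\sum_{g\in G}|\mathrm{Fix}(g)| = \frac{1}{36}\sum_{(\pi,\tau)\in S_3\times S_3}|\mathrm{Fix}(\pi,\tau)|,
\end{equation*}
where $\mathrm{Fix}(\pi,\tau)$ is the set of triads left invariant by $(\pi,\tau)$. It therefore suffices to count fixed points.

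The key step, and the one I expect to require the most care, is the fixed-point formula. A triad is fixed by $(\pi,\tau)$ precisely when $\sigma_{\pi(j)} = \tau\circ\sigma_j$ for all $j$. Following this relation around a cycle of $\pi$ of length $\ell$ yields $\sigma_j = \tau^{\ell}\circ\sigma_j$, so a consistent assignment exists on that cycle if and only if $\tau^{\ell}=e$ (equivalently $\mathrm{ord}(\tau)\mid\ell$), and when it does the single free choice of $\sigma_j$ on the cycle can be any of the $6$ permutations. Hence
\begin{equation*}
|\mathrm{Fix}(\pi,\tau)| = \begin{cases} 6^{c(\pi)} & \text{if } \mathrm{ord}(\tau)\mid\ell \text{ for every cycle length } \ell \text{ of } \pi,\\ 0 & \text{otherwise,}\end{cases}
\end{equation*}
where $c(\pi)$ denotes the number of cycles of $\pi$. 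The subtlety is that any $1$-cycle in $\pi$ forces $\mathrm{ord}(\tau)\mid 1$, i.e.\ $\tau = e$, so fixed-point bookkeeping must treat the $1$-cycles carefully.

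Finally I would tabulate this over the three cycle types of $\pi$ in $S_3$. For $\pi = e$ (cycle lengths $1,1,1$, so $c=3$) only $\tau=e$ contributes, giving $6^3 = 216$. For each of the $3$ transpositions $\pi$ (lengths $2,1$, so $c=2$) again only $\tau=e$ qualifies, giving $3\cdot 6^2 = 108$. For each of the $2$ three-cycles $\pi$ (length $3$, so $c=1$) the admissible $\tau$ are $e$ and the two $3$-cycles, giving $2\cdot 3\cdot 6 = 36$. Summing, $\sum_{g}|\mathrm{Fix}(g)| = 216 + 108 + 36 = 360$, and dividing by $|G| = 36$ yields exactly $10$ equivalence classes, as claimed.
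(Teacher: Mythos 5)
Your proof is correct, but it takes a genuinely different route from the paper's. The paper proves Theorem \ref{thm:main} by exhaustion: it first normalizes one node's ordering to \(ABC\), cutting the \(216\) triads down to \(36\) cases, and then explicitly lists the admissible transformations of each case to group them by hand into the \(10\) classes recorded in Table \ref{fig:equiv_partitioning}. You instead realize \(\mathcal{R}\) as the orbit relation of the group \(G = S_3 \times S_3\) (node relabelings permuting coordinates, alternative relabelings acting by simultaneous left multiplication; these commute, so the action is well defined) and apply the Cauchy--Frobenius/Burnside lemma. Your fixed-point analysis is sound: a cycle of \(\pi\) of length \(\ell\) admits a consistent assignment if and only if \(\tau^{\ell} = e\), in which case it contributes one free factor of \(6\); the resulting tally \(216 + 108 + 36 = 360\), divided by \(|G| = 36\), indeed gives \(10\). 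What the paper's enumeration buys is the explicit set of class representatives, which it needs later both to interpret the classes in balance-theoretic terms (which classes look like \(+++\), \(+--\), \(---\)) and to classify triangles in the experimental section; a Burnside count alone proves the number is \(10\) without exhibiting the classes. What your approach buys is brevity, verifiability without a table, and immediate generalization: replacing \(6\) by \(n!\) in your tally, the identity \(\pi\) contributes \((n!)^3\) (only \(\tau = e\)), each of the three transpositions contributes \((n!)^2\) (the fixed point of \(\pi\) again forces \(\tau = e\)), and each of the two \(3\)-cycles contributes \((1+\ell_n)\,n!\), where \(\ell_n\) is the number of order-\(3\) elements of \(S_n\) (the admissible \(\tau\)); hence the class count is
\begin{equation*}
\frac{(n!)^3 + 3(n!)^2 + 2(\ell_n+1)\,n!}{6\,n!} \;=\; \frac{n!(n!+3)+2(\ell_n+1)}{6},
\end{equation*}
which is exactly the formula of Theorem \ref{thm:generalcase}, proved in the paper's appendix by a separate case analysis on normalized triples \((\varepsilon,\sigma,\pi)\). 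So your single argument subsumes both of the paper's theorems.
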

\begin{proof}
	 There are \((3!)^3 = 216\) possible preference triads on \(3\) alternatives, \(A,B,C\), but since we always can swap the alternatives such that one of the preferences is \(A\succ B\succ C\) (abbreviated \(ABC\)), the \(216\) cases can first be reduced to \((3!)^2 = 36\) cases. These \(36\) cases are listed in Table \ref{fig:equiv_classes}, where the three rows in each case represent the three nodes in the corresponding preference triad. The top row is always \(ABC\). Thus, a swap of two rows is equivalent to letting the two corresponding nodes change labels with each other. 
	
	The equivalence relation \(\mathcal{R}\) partitions the cases into different equivalence classes: Consider for example Case 2. We can list all of its possible transformations,
	\begin{equation*}
				\begin{smallmatrix}
			ABC & ABC & ABC\\
			ABC & ACB & ACB\\
			ACB & ABC & ACB,
		\end{smallmatrix}
	\end{equation*}
	
	\noindent
	where the middle matrix is obtained by swapping rows \(2\) and \(3\), and the last matrix is obtained by first swapping rows \(1\) and \(3\) and then swapping \(B\) and  \(C\) in all three rows. We identify the middle matrix as Case 7, and the last matrix as Case 8, so Case 2 is related to both of these cases under \(\mathcal{R}\). On the other hand, it is not related to any other case, for example Case \(23\). To see this, we can list all possible transformations of case \(23\), to obtain
	
	\begin{equation*}
		\begin{smallmatrix}
						ABC & ABC \\
			BCA & CAB \\
			CAB & BCA,
		\end{smallmatrix}
	\end{equation*}	
	\noindent
	and note that none of these matrices matches the transformations of Case 2. By proceeding in this manner for all \(36\) cases, it can be shown by exhaustion that there are exactly \(10\) equivalence classes, highlighted in color in Table \ref{fig:equiv_classes}. The equivalence classes with their representatives are listed in Table \ref{fig:equiv_partitioning}. 
\end{proof}

 \begin{table}
 	\addtocounter{table}{-1}
 		\centering
 		\caption{(a) The equivalence classes on \(3\) elements, and (b) their partitioning of the possible preference triads.}
 	\subfloat[The possible preference triads on \(3\) elements, with the \(10\) equivalence classes induced by the relation \(\mathcal{R}\) marked in color.]{
 		\includegraphics[scale=0.6]{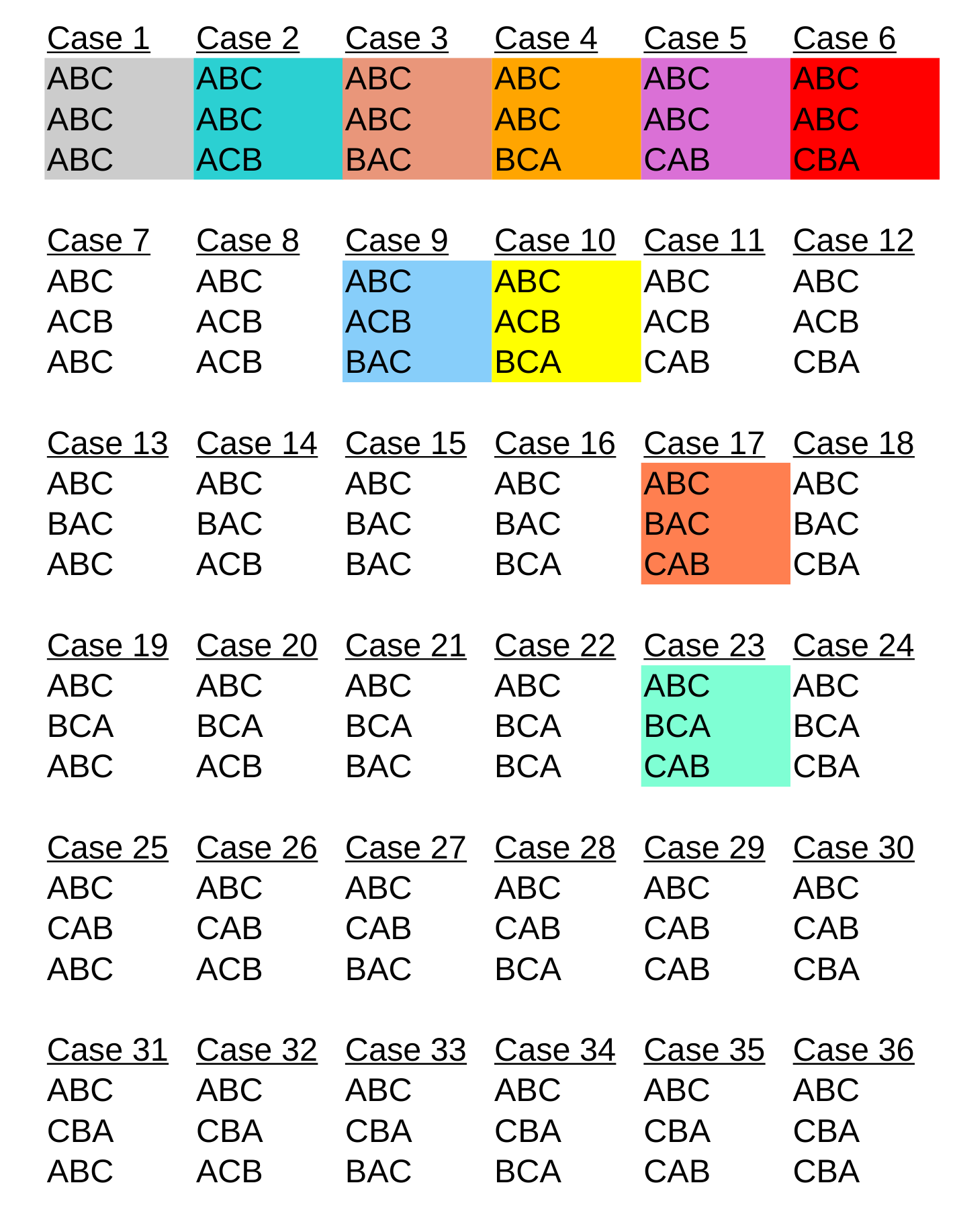}
 		\label{fig:equiv_classes}
	 	}
 	\hfil\subfloat[The\(10\)  equivalence classes and their representatives.]{
 		\begin{tabular}{|c|c|}
 			\hline
 			Equivalence class & Cases \\
 			\hline
 			1&1 \\
 			2&2, 7, 8\\
 			3&9, 11, 14, 16, 21,  26 \\
 			4&10, 12, 20, 30, 32, 35\\
 			5&3, 13, 15\\
 			6&17, 18, 24, 27, 33, 34\\
 			7&4, 19, 29\\
 			8&5, 22, 25\\
 			9&6, 31, 36\\
 			10&23, 28\\
 			\hline
 		\end{tabular}
 			\label{fig:equiv_partitioning}}
 \end{table}

A natural question is if it is possible to define a notion of a \textit{balanced} preference triangle. In classical balance theory, a
triangle is either balanced or unbalanced. There is no obvious analog to this idea for preference triangles since, by Theorem
\ref{thm:main}, each such triangle falls into one of \(10\) different equivalent classes. However, if the equivalence classes could be totally
ordered, it might be possible to define balance so that the order could be interpreted as ranging from ``most balanced'' to
``least balanced''. This is motivated by the fact that there seems to be at least one intuitive partial order on the equivalence
classes: In case \(1\) in Table \ref{fig:equiv_classes}, all nodes agree perfectly on the preference orderings, which could be interpreted as positive
relations between all three nodes, i.e., a \(+++\) triangle in terms of classical balance theory. The cases \(2, 3, 4, 5\) and \(6\)
(and their equivalents) might be interpreted as \(+--\) triangles, with further refinement of the internal order possible since in
case \(2\) all preference orderings starts with \(A\), while in cases \(3\) to \(6\) only two preference orderings start with \(A\). Similarly,
in cases \(9, 10, 17\) and \(23\) (and their equivalents), none of the preference orderings are identical, which might be interpreted
as \(---\) triangles, and again the internal order might be further refined since in case \(9\) and \(10\) two of the preferences
start with \(A\). In case \(23\) the preference orderings are in fact maximally different, so in some sense this could be seen as
the ``least balanced'' preference triangle. A triangle reminiscent of \(++-\) in classical balance theory does not exist. This is
an important observation, since it leaves only \(---\) as a possible unbalanced triangle. However, it has been argued (see,
e.g., \cite{davis67}) that the definition of balance should be generalized to permit also the triangle \(---\) as a balanced triangle, in
which case one talks about weak balance theory. Therefore, in view of the above mapping from preference triangles to signed
triangles, all triangles are weakly balanced.

Consequently, a notion of balance for preference triangles is perhaps best defined in terms of partial balance, a concept
discussed in \cite{aref17}, since the equivalence classes intuitively can be partially ordered. However, we have been unsuccessful in
finding a total order for the equivalence classes, and even though several partial orderings can be constructed if one
allows ties, we have not found an objective argument for preferring one partial ordering over another. Therefore we have
not addressed this question in detail in the current paper, and while we may not have a firm answer, it might be an interesting
direction to explore further.

Another natural question is if Theorem \ref{thm:main} can be generalized
to the case with \(n\) alternatives, with \(n \geq 2\). In the appendix we derive a closed form expression for the number of equivalence classes induced by \(\mathcal{R}\), and show that it can be expressed as a function of \(n\) (the number of elements in the permutations). In particular, we prove the following theorem.
	\begin{theorem}\label{thm:generalcase}
		Let \(n\in \mathbb{N}\) with \(n \geq 2\), let \(\mathcal{S}\) denote the set of all preference triads on \(n\) elements and let \([x]\) denote the integer part of \(x\in\mathbb{R}\). Then the number of equivalence classes of \(\mathcal{S}\) induced by the relation \(\mathcal{R}\) is
		\begin{equation*}
		\left|\mathcal{S}/\mathcal{R}\right| = \dfrac{n!(n!+3)+2(\ell_n+1)}{6},
		\end{equation*}
		where
	 \begin{equation}
		\ell_n = \sum_{m=1}^{\left[\dfrac{n}{3}\right]} \dfrac{n!}{(n-3m)!m!3^m}.
		\end{equation}
	\end{theorem}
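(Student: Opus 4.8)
The plan is to reinterpret $\mathcal{R}$ as the orbit equivalence of a finite group acting on the set of triads, and then count orbits with the Cauchy--Frobenius (Burnside) lemma. Identify a preference triad with an ordered triple $(\sigma_1,\sigma_2,\sigma_3)\in(S_n)^3$, where $\sigma_i$ is the preference ordering at node $v_i$. Relabeling the three nodes permutes the coordinates (an action of $S_3$), while applying one common permutation $\tau$ of the alternatives to every node sends each $\sigma_i$ to $\tau\sigma_i$ (an action of $S_n$ by simultaneous left translation). These two families of moves commute, so together they generate an action of $G=S_3\times S_n$ on $X=(S_n)^3$, given by
\begin{equation*}
(\pi,\tau)\cdot(\sigma_1,\sigma_2,\sigma_3)=\bigl(\tau\sigma_{\pi^{-1}(1)},\,\tau\sigma_{\pi^{-1}(2)},\,\tau\sigma_{\pi^{-1}(3)}\bigr).
\end{equation*}
The orbits of this action are exactly the classes of $\mathcal{R}$, and $\lvert G\rvert=6\,n!$, so Burnside's lemma gives $\lvert\mathcal{S}/\mathcal{R}\rvert=\frac{1}{6\,n!}\sum_{(\pi,\tau)\in G}\lvert\mathrm{Fix}(\pi,\tau)\rvert$, where a triad is fixed iff $\sigma_i=\tau\sigma_{\pi^{-1}(i)}$ for $i=1,2,3$.

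The core of the argument is to evaluate the fixed-point counts according to the cycle type of the node permutation $\pi$. If $\pi=e$, the condition reads $\sigma_i=\tau\sigma_i$, which forces $\tau=e$ and then leaves all $(n!)^3$ triads fixed. If $\pi$ is one of the three transpositions, the coordinate it fixes again forces $\tau=e$; the two swapped coordinates must then be equal and the third is free, yielding $(n!)^2$ fixed triads. If $\pi$ is one of the two $3$-cycles, chaining the three conditions around the cycle gives $\sigma_1=\tau^3\sigma_1$, so $\tau^3=e$ is forced; conversely, for every $\tau$ with $\tau^3=e$ one coordinate may be chosen freely and the other two are then determined consistently, giving exactly $n!$ fixed triads. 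Hence the only remaining quantity is $N:=\lvert\{\tau\in S_n:\tau^3=e\}\rvert$.

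Next I would count $N$, the number of permutations whose cycles all have length $1$ or $3$. Splitting by the number $m$ of $3$-cycles and using the standard cycle-type formula, the permutations with exactly $m$ three-cycles (and $n-3m$ fixed points) number $\tfrac{n!}{(n-3m)!\,m!\,3^m}$; summing over $0\le m\le[n/3]$ and separating the $m=0$ term gives $N=1+\ell_n$ with $\ell_n$ as defined in the statement. Assembling Burnside's sum then yields
\begin{equation*}
\lvert\mathcal{S}/\mathcal{R}\rvert=\frac{(n!)^3+3(n!)^2+2\,n!\,(\ell_n+1)}{6\,n!}=\frac{n!(n!+3)+2(\ell_n+1)}{6},
\end{equation*}
after cancelling the common factor $n!$.

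Setting up the action and extracting the correct fixed-point equations is the part that demands the most care: one must fix a convention for how element-relabeling composes with a preference ordering and verify the consistency of the $3$-cycle constraints (that the single free coordinate really does propagate to a genuine fixed point), since this is where the condition $\tau^3=e$, and hence $\ell_n$, enters. The combinatorial enumeration of $N$ is then routine. As a consistency check, $n=3$ gives $\ell_3=2$ and $\lvert\mathcal{S}/\mathcal{R}\rvert=(54+6)/6=10$, recovering Theorem~\ref{thm:main}.
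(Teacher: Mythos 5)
Your proof is correct, but it follows a genuinely different route from the paper's. The paper normalizes each triad so that one node carries the identity permutation $\varepsilon$, and then sorts the normalized triples $(\varepsilon,\sigma,\pi)$ into four cases according to how many normalized representatives each class contains ($1$, $3$, $2$, or $6$): the all-identity triad, the case $(\varepsilon,\varepsilon,\sigma)$, the case $(\varepsilon,\sigma,\sigma^2)$ with $\sigma^3=\varepsilon$ (counted via its Lemma on the number $\ell_n$ of order-$3$ elements of $S_n$), and the generic case, whose count is obtained by subtraction and division by $6$; summing gives $1+(n!-1)+\ell_n/2+\bigl[(n!)^2-1-3(n!-1)-\ell_n\bigr]/6$, which simplifies to the stated formula. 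You instead realize $\mathcal{R}$ as the orbit relation of $S_3\times S_n$ acting on $(S_n)^3$ and apply Burnside's lemma, with the condition $\tau^3=e$ (hence $N=\ell_n+1$) entering through the node $3$-cycles, exactly where the paper's case $(\varepsilon,\sigma,\sigma^2)$ enters; your fixed-point counts ($(n!)^3$ for $\pi=e$, $(n!)^2$ per transposition, $n!$ per $3$-cycle and admissible $\tau$) are all correct and assemble to the same formula. Both arguments ultimately rest on the same enumeration of permutations of order dividing $3$. What your approach buys: Burnside handles degenerate classes automatically, whereas the paper's division by $6$ in the generic case tacitly assumes the six listed representatives are pairwise distinct (trivial stabilizer), a point it lists but never verifies; your method needs no such check and generalizes immediately to $k$-cliques via $S_k\times S_n$. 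What the paper's approach buys: it is elementary (no orbit-counting machinery) and produces explicit representatives of each class type, which is what underlies the explicit $10$-class table for $n=3$. Your closing remark about the left-versus-right multiplication convention is apt, and harmless either way: both conventions give a free $S_n$-action commuting with the coordinate permutations, so the fixed-point analysis is identical.
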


	Since the number of equivalence classes increases super-exponentially with \(n\) (for \(n=4,5,6,\dots\) there are \(111,2467,86787,\dots\) number of equivalence classes, respectively), it quickly becomes impractical to explicitly list all of them. Therefore this paper is confined to the minimal non-trivial case, \(n = 3\).
	
\section{Experimental results}
In this section, we analyze an authentic social network with preferences as node attributes.  In the standard theory of structural balance, a complete signed network is said to be balanced if all of its triangles are balanced. We extend this idea to preference triangles and  enumerate all such triangles into the different equivalence classes described in the previous section. The aim is to determine if the distribution of equivalence classes of preference triangles is significantly different from what one would expect by chance: Specifically we construct synthetic networks by performing randomized degree-preserving edge-rewirings \cite[Chapter 4]{barabasi16} on the real-world network, where the nodes have preferences that follow the empirical distribution of the authentic data set. The null hypothesis is that there is no  significant difference between the distribution of equivalence classes for the authentic network and those of the synthetic networks.

The data set, shared with us by the authors of \cite{dhamal2018modeling},  was collected from a specially designed Facebook app where users were asked to rank their preferences on \(8\) topics, with each topic containing \(5\) items, see \cite[Table VI, Appendix A]{dhamal2018modeling} for details. Thus the data set consists of each individual's preferences on each topic together with the underlying social network (there is a link between two users if they are friends on Facebook). The network consists of \(844\) nodes, \(6129\) edges and the fraction of closed triangles is \(0.4542\).\cite[Table 1]{dhamal2018modeling}

The data is analyzed as follows: For each of the \(8\) topics, we pick a subset of \(3\) out of the \(5\) available items, giving us \(8\binom{5}{3} = 80\) sets of preferences, where each set contains \(844\) individual preferences (one per node). The internal order of these partial preferences is preserved, so for example if one of the original preferences is \(ADCEB\) and we select \(\{A,B,E\}\) as our subset of items, then the extracted preference becomes \(AEB\). For each of the \(80\) so-obtained preference sets, we calculate the empirical distribution of all possible preferences over the nodes. We then construct 10 synthetic networks by randomly rewiring the edges in the network such that the node degrees are preserved. For each of the \(80\) preference sets, we let the preferences of the nodes in the synthetic networks follow the same empirical distribution as the authentic data. Finally, we calculate the distribution of the \(10\) equivalence classes for both the authentic and synthetic networks. Thus we obtain \(80\) sets of real-world data, where each set is compared to \(10\) artificially constructed networks in order to test our null hypothesis.

In our analysis we were unable to find a sufficient significant difference to reject the null hypothesis, as illustrated by Figure \ref{fig:hist_pref}: The histograms of equivalence classes for the \(10\) synthetic networks and that of the authentic network overlap to a high extent, and we found similar results for all of the \(80\) preference sets. As noted in Section \ref{sec:intro}, research on preference orders as node attributes in a network is scarce, and we are currently not aware of any other available data sets. Therefore it is at this point unclear whether this negative result is due to the particular data set or if it is indicative of a more general phenomenon.

\begin{figure}[tp]
	\centering
		\includegraphics[scale=0.35]{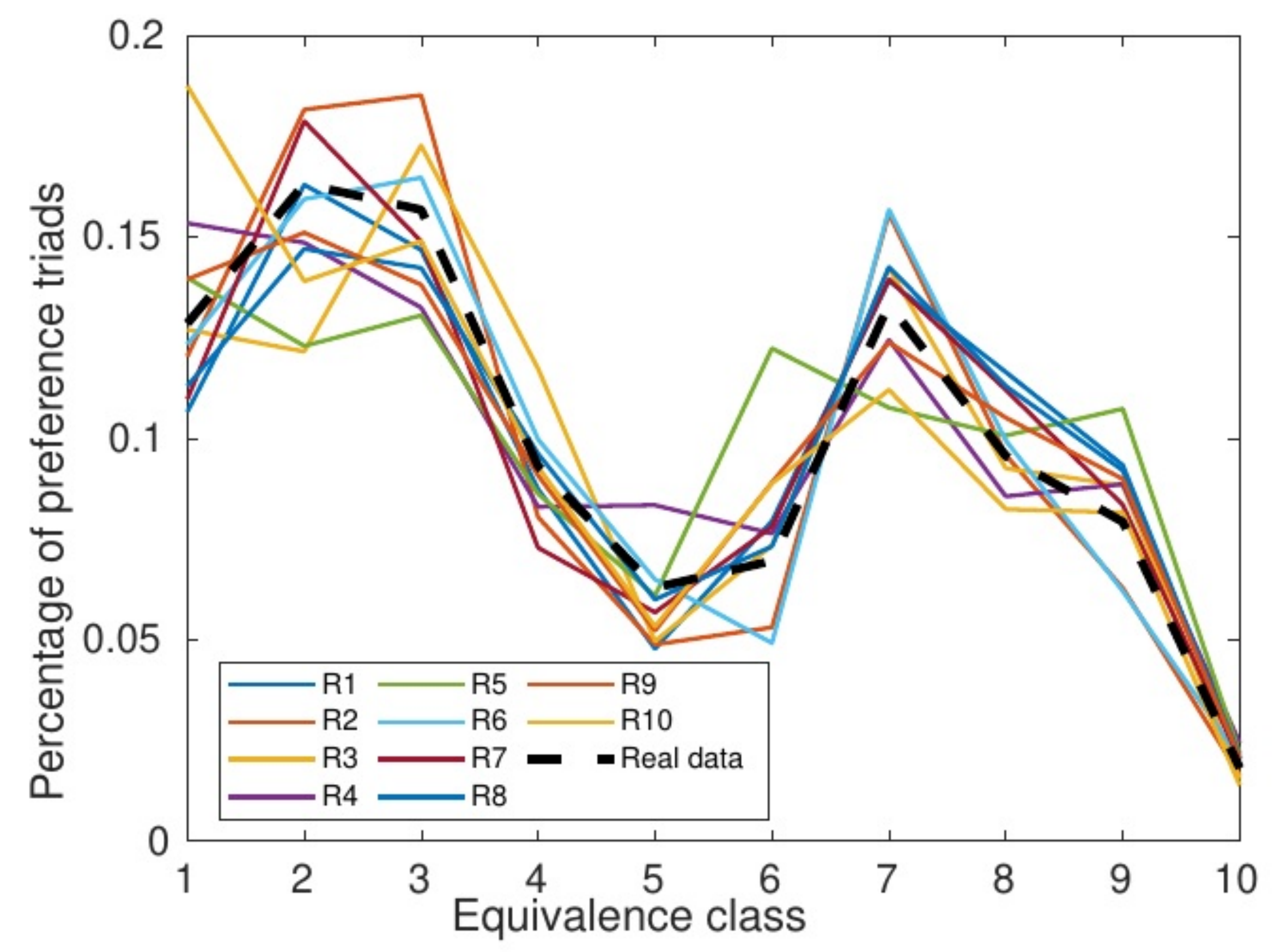}
		\caption{Histograms over equivalence classes of preference triads for the authentic network compared with \(10\) random degree-preserving edge-rewirings of said network (R1-R10), where the distribution of preferences in the rewired networks follow the empirical distribution of the authentic network. Here, the topic is ``Facebook activity'', and the alternatives are ``Viewing posts'', ``Chatting'' and ``Games/Apps'}
		\label{fig:hist_pref}
\end{figure}

\begin{center}
	\begin{table*}[t]
		\caption{Reproduction of Table VI in \cite[Appendix A]{dhamal2018modeling}}
		{\small
			\hfill{}
			\begin{tabular}{|c|c|c|c||c|c|c|c|}
				\hline
				\multicolumn{4}{|c||}{\textbf{Personal}}& \multicolumn{4}{|c|}{\textbf{Social}}\\
				\cline{3-8}
				\hline
				Hangout & Chatting & Facebook & Lifestyle & Website & Government & Serious & Leader \\
				place & app & activity & & visited & investment & crime & \\
				\hline
				Friend's place & WhatsApp & Viewing posts & Intellectual & Google & Education & Rape & N. Modi (India)\\
				Adventure park & Facebook & Chatting & Exercising & Facebook & Agriculture & Terrorism & B. Obama (USA)\\
				Trekking & Hangouts & Posting & Social activist & Youtube & Infrastructure & Murder & D. Cameron (UK)\\
				Mall & SMS & Games/Apps & Lavish & Wikipedia & Military & Corruption & V. Putin (Russia) \\
				Historical place & Skype & Marketing & Smoking & Amazon & Space explore & Extortion & X. Jinping (China)\\
				\hline
		\end{tabular}}
		\hfill{}
		\label{tab:origdata}
	\end{table*}
\end{center}

\section{Conclusions and discussion}
We have characterized the triangles that can occur in net-
works with preference orders as node attributes. Specifically,
for preference orders with three alternatives, we showed in
Theorem \ref{thm:main} that there are only \(10\) unique preference triangles.
We have also analyzed numerically an authentic data set
and compared its empirical distribution of unique preference
triangles with a null hypothesis with randomized preference
orderings. We hope that these results will stimulate others to
explore variations of the framework presented here, and collect
data that can be used for further quantitative studies. Some
open problems include the following:
\begin{itemize}
	\item  Is it possible to find a total order for the equivalence
	classes?
	\item Is there an objective argument for choosing a particular
	partial order for the equivalence classes? While we are not aware of any literature that addresses this particular issue, a recent paper \cite{gao21}
	proposed a partial order of the set of preference profiles between individuals.  Another paper of potential interest is \cite{jiang11}, where combinatorial Hodge theory was proposed as a tool for statistical analysis of ranking data through minimization of pairwise ranking disagreement. To the best of our knowledge, it is an open problem for both of these approaches whether or not they are generalizable to comparisons between groups of \(n \geq 2\) preferences per group (with \(n = 3\) being the special case of interest in our setting).
	\item Given an order of equivalence classes, how should the
	different classes map to structural balance? That is, how
	should such a mapping be formally defined?
	\item Does the distribution of equivalence classes differ significantly between real-world networks and synthetic
	networks? More samples of authentic networks with
	preferences as node attributes are needed for a robust
	analysis.
\end{itemize}
As a final remark, note that by pairing the two node attributes associated with a particular edge, a network with node attributes could always be mapped onto a network with edge
attributes. In particular, in an attempt to interpret structural balance in terms of node preferences, one could consider networks with two different types of nodes, representing two
different opinions (\(A\) and \(B\), say) and define an edge to be ``\(+\)'' if it connects two nodes of type \(A\) or two nodes of type \(B\), and ``\(-\)'' otherwise. Unfortunately, this does not lead 
anywhere as the resulting network is always balanced (in fact, it has the natural partitioning in two parts consisting of \(A\)-nodes and \(B\)-nodes, respectively). In order to map node 
attributes to edge attributes in a way that leads to non-trivial results, one must go beyond binary node attributes. The path we explored in this paper has been to associate preference 
orderings with the nodes. Future work may consider other possibilities.

\appendix
In order to prove Theorem \ref{thm:generalcase}, we first need a lemma.
\begin{lemma}\label{thm:lem}
	Let \(\ell_n\) denote the number of elements of order \(3\) in the symmetric group \(S_n\). Then
	\begin{equation}\label{eq:numel_ord3_Sn}
	\ell_n = \sum_{m=1}^{\left[\dfrac{n}{3}\right]} \dfrac{n!}{(n-3m)!m!3^m}.
	\end{equation}
\end{lemma}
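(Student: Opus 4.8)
The plan is to count elements of $S_n$ of order $3$ by sorting them according to their cycle type. The first step is to translate the algebraic condition ``order $3$'' into a combinatorial one. An element $\sigma\in S_n$ has order equal to the least common multiple of its cycle lengths; this lcm equals $3$ precisely when every cycle of $\sigma$ has length dividing $3$ (so every cycle is either a fixed point or a $3$-cycle) and at least one $3$-cycle is present (otherwise $\sigma$ would be the identity, of order $1$). Hence the elements we must count are exactly the permutations whose cycle type consists of $m$ disjoint $3$-cycles together with $n-3m$ fixed points, for some $m\geq 1$. Since the $3m$ moved elements must be distinct among the $n$ available, $m$ ranges over $1\leq m\leq[n/3]$, which already explains the summation limits.

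The second step is to count, for each fixed $m$, the permutations of this cycle type. I would do this by a direct overcounting argument: lay out an ordered list of the $3m$ elements that will be moved, chosen from the $n$ labels, which can be done in $n!/(n-3m)!$ ways; read this list off in consecutive blocks of three, each block being interpreted as a $3$-cycle. This construction produces every permutation of the desired type, but with multiplicity. Each individual $3$-cycle is represented by $3$ of its cyclic rotations, giving a factor $3^m$, and the $m$ blocks may be permuted among themselves without changing the resulting permutation, giving a factor $m!$. Dividing out these symmetries yields
\begin{equation*}
\frac{n!}{(n-3m)!\,m!\,3^m}
\end{equation*}
permutations with exactly $m$ three-cycles, which is precisely the summand in \eqref{eq:numel_ord3_Sn}. (Equivalently, one may simply invoke the standard formula $n!/\prod_k k^{a_k}a_k!$ for the number of permutations of cycle type $1^{a_1}2^{a_2}\cdots$, specialized to $a_1=n-3m$, $a_3=m$, and all other $a_k=0$.)

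The final step is to sum the counts over the disjoint cases $m=1,\dots,[n/3]$, since a permutation of order $3$ has a unique number of $3$-cycles and these cases partition the set. This gives $\ell_n=\sum_{m=1}^{[n/3]} n!/\big((n-3m)!\,m!\,3^m\big)$, as claimed. I do not expect a genuine obstacle here: the only conceptual point is the cycle-type characterization of order-$3$ elements, and everything else is the routine bookkeeping of the overcounting correction. The one detail worth stating carefully is why the factor for each $3$-cycle is exactly $3$ (the number of cyclic rotations of a length-$3$ cycle, i.e. $3!/3=2$ distinct cycles arise per ordered triple, so one must be slightly careful whether the factor is $3$ or the number of distinct cycles is $2$); writing out the argument as ``ordered triples modulo rotation and modulo block order'' keeps this transparent and matches the stated formula.
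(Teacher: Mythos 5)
Your proposal is correct and follows essentially the same route as the paper's own proof: characterize order-$3$ elements by their cycle type (only fixed points and $3$-cycles, with at least one $3$-cycle), count those with exactly $m$ three-cycles by an overcounting argument (your $n!/(n-3m)!$ ordered selections are exactly the paper's $\binom{n}{3m}(3m)!$, divided by the same symmetry factors $m!$ and $3^m$), and sum over $1 \leq m \leq [n/3]$. If anything, your version is slightly more careful than the paper's in explicitly noting that at least one $3$-cycle is required to exclude the identity.
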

\begin{proof}
	Any element \(g\in S_n\) can be written as a product of disjoint cyclic permutations, and the order of \(g\) is the least common multiple of the orders of these cycles. Thus \(g\) has order \(3\) only if its cyclic decomposition consists of identities and \(3\)-cycles. The latter will be constructed from a subset with \(3m\) elements, for some positive integer \(m\) such that \(3m\leq n\), and there are \(\binom{n}{3m}\) ways of choosing such a subset. From this set we can create \(m\) disjoint \(3\)-cycles in \((3m)!\) ways to obtain a product of the form
	\begin{equation}
	\underbrace{
		\underbrace{(a_2a_3a_3)}_\text{\(3\)-cycle} \underbrace{(a_4a_5a_6)}_\text{\(3\)-cycle}
		\dots
		\underbrace{(a_{3m-2}a_{3m-1}a_m)}_\text{\(3\)-cycle}}
	_\text{m \(3\)-cycles}
	\end{equation}
	Since these cycles are disjoint, there are \(m!\) ways to permute them. In turn, each \(3\)-cycle can be permuted in \(3!/2 = 3\) unique ways, giving us \(3^m\) ways to arrange them in total.
	
	Putting this together, we have the following result. For any positive integer \(m\leq [n/3]\), the number of elements of order \(3\) in \(S_n\) is equal to
	\begin{equation}
	\begin{aligned}
	&\binom{n}{3m} \dfrac{(3m)!}{m!3^m}\\
	&= \dfrac{n!(3m)!}{(3m)!(n-3m)!m!3^m}\\
	&= \dfrac{n!}{(n-3m)!m!3^m}
	\end{aligned}
	\end{equation}
	The result follows by summing over all \(m\) such that\\ \(1 \leq m \leq [n/3]\).
\end{proof}
Now we proceed to prove the Theorem \ref{thm:generalcase}.
\begin{proof}[Proof of Theorem \ref{thm:generalcase}]
	Note that we can always relabel the nodes and swap elements in the permutations such that the permutation associated with one of the nodes is the identity permutation, denoted by \(\varepsilon\). Therefore we only need to consider ordered lists of the form \((\varepsilon,\sigma,\pi)\), where \(\sigma\) and \(\pi\) are arbitrary permutations on \(n\) elements. Four cases can occur:
	\begin{subequations}\label{eq:cases}
		\begin{align}
		&(\varepsilon,\varepsilon,\varepsilon) \label{eq:case1} \\
		&(\varepsilon,\varepsilon,\sigma)\sim (\varepsilon,\sigma^{-1},\sigma^{-1})\sim (\varepsilon,\sigma,\varepsilon), \ \sigma \neq \varepsilon \label{eq:case2}  \\
		&(\varepsilon,\sigma,\sigma^2)\sim (\varepsilon,\sigma^2,\sigma), \ 
		\begin{cases}
		\sigma \neq \varepsilon \\
		\sigma^3 = \varepsilon
		\end{cases} \label{eq:case3} \\
		& (\varepsilon,\sigma,\sigma\pi)\sim (\varepsilon,\sigma\pi,\sigma) \label{eq:case4} \\
		&\sim (\varepsilon,\sigma^{-1},\pi) \sim (\varepsilon,\pi,\sigma^{-1}) \nonumber\\
		&\sim (\epsilon,\pi^{-1},\sigma^{-1}\pi^{-1}) \sim (\varepsilon,\sigma^{-1}\pi^{-1},\sigma^{-1}), \ 
		\begin{cases}
		\sigma \neq \varepsilon \\
		\pi \neq \varepsilon \\
		\sigma \neq \pi.
		\end{cases} \nonumber
		\end{align}
	\end{subequations}
	In cases \eqref{eq:case2} to \eqref{eq:case4} we can obtain equivalent lists by multiplying the permutations with an inverse permutation that takes one of them to the identity (e.g. \((\varepsilon,\varepsilon,\sigma)\) can be multiplied with \(\sigma^{-1}\) resulting in \((\varepsilon,\sigma^{-1},\sigma^{-1})\)). By relabeling the nodes we can also obtain additional equivalent lists: \((\varepsilon,\varepsilon,\sigma)\) is equivalent to \((\varepsilon,\sigma,\varepsilon)\). By doing so we get \(1, 3, 2\) and \(6\) possibilities for cases \eqref{eq:case1}, \eqref{eq:case2}, \eqref{eq:case3} and \eqref{eq:case4}, respectively.
	
	The number of equivalence classes will be equal to the sum of the number of unique representatives in each case. In case \eqref{eq:case1} there is only one possibility. In case \eqref{eq:case2} there are \(n!-1\) possibilities since \(\sigma\) can be any permutation on \(n\) symbols except the identity. In \eqref{eq:case3} we require \(\sigma\) to be of order \(3\), and by Lemma \ref{thm:lem} the number of possibilities for such permutations is \(\ell_n\). It follows that the number of possibilities is \(\ell_n/2\). Finally, note that the number of ways to arrange \((\varepsilon, \sigma, \pi)\) is equal to \((n!)^2\). Therefore we can deduce that the number of possibilities for \eqref{eq:case4} must be equal to
	\begin{equation}
	\dfrac{(n!)^2-1-3(n!-1)-2\ell_n/2}{6}.
	\end{equation}
	Thus the total number of equivalence classes is
	\begin{equation}
	\begin{aligned}
	\left|\mathcal{S}/\mathcal{R}\right| &= 1 + (n!-1) + \dfrac{\ell_n}{2} + \dfrac{(n!)^2-1-3(n!-1)-2\ell_n/2}{6}\\
	&= \dfrac{n!(n!+3)+\ell_n/2}{6}.
	\end{aligned}
	\end{equation}
\end{proof}

\section*{Acknowledgments}
We thank the authors of \cite{dhamal2018modeling} for generously sharing the datset that we used for the numerical experiments.
	\bibliographystyle{IEEEtran}
	\bibliography{preftriads_equivclasses}
\end{document}